\title{Algorithm Engineering of SSSP With Negative Edge Weights}
\author{Alejandro Cassis}{Saarland University and Max Planck Institute for Informatics, Saarland Informatics Campus, Saarbrücken, Germany}{acassis@mpi-inf.mpg.de}{}{}
\author{Andreas Karrenbauer}{Max Planck Institute for Informatics and Saarland University, Saarland Informatics Campus, Saarbrücken, Germany}{andreas.karrenbauer@mpi-inf.mpg.de}{https://orcid.org/0000-0001-6129-3220}{}
\author{André Nusser}{Université Côte d'Azur, CNRS, Inria, France}{andre.nusser@cnrs.fr}{https://orcid.org/0000-0002-6349-869X}{This work was supported by the French government through the France 2030 investment plan managed by the National Research Agency (ANR), as part of the Initiative of Excellence of Université Côte d'Azur under reference number ANR-15-IDEX-01.}
\author{Paolo Luigi Rinaldi}{Max Planck Institute for Informatics and Saarland University, Saarland Informatics Campus, Saarbrücken, Germany}{prinaldi@mpi-inf.mpg.de}{https://orcid.org/0000-0003-1963-4516}{}
\authorrunning{A.\ Cassis, A.\ Karrenbauer, A.\ Nusser, P.L.\ Rinaldi} %
\keywords{Single Source Shortest Paths, Negative Weights, Near-Linear Time} %
\newcommand{\GOR}{\texttt{GOR}\xspace}
\newcommand{\BFCT}{\texttt{BFCT}\xspace}
\newcommand{\BCF}{\texttt{OUR}\xspace}
\newcommand{\SSSPDag}{\textsc{FixDAGEdges}\xspace}
\newcommand{\LazyDijkstra}{\textsc{LazyDijkstra}\xspace}
\newcommand{\Decompose}{\textsc{Decompose}\xspace}
\newcommand{\bad}{\texttt{BAD}\xspace}
\newcommand{\badbfct}{\texttt{BAD-BFCT}\xspace}
\newcommand{\baddfs}{\texttt{BAD-DFS}\xspace}
\newcommand{\badgor}{\texttt{BAD-GOR}\xspace}
\newcommand{\badrdone}{\texttt{BAD-RD}\xspace}  %
\newcommand{\badrdtwo}{\texttt{BAD-RDB}\xspace}  %
\newcommand{\aug}{\texttt{AUG}\xspace}
\newcommand{\augbfct}{\texttt{AUG-BFCT}\xspace}
\newcommand{\augdfs}{\texttt{AUG-DFS}\xspace}
\newcommand{\auggor}{\texttt{AUG-GOR}\xspace}
\newcommand{\augrdone}{\texttt{AUG-RD}\xspace}  %
\newcommand{\augrdtwo}{\texttt{AUG-RDB}\xspace}  %
\newcommand{\shiftgor}{\texttt{SHIFT-GOR}\xspace}
\newcommand{\rand}{\texttt{RANDOM~RESTRICTED}\xspace}
\newcommand{\usa}[1]{\texttt{USA-#1}\xspace}
\newcommand{\BAD}{\emph{BAD}\xspace}
\newcommand{\AUG}{\emph{AUG}\xspace}
\newcommand{\USA}{\texttt{USA}\xspace}
\newcommand{\todo}[1]{}
\newcommand{\andre}[1]{}
\newcommand{\paolo}[1]{}
\newcommand{\andreas}[1]{}
\newcommand{\alejandro}[1]{}
\newcommand{\needinfo}[1]{}
\begin{document}

\maketitle

\begin{abstract}

Computing shortest paths is one of the most fundamental algorithmic graph problems.
It is known since decades that this problem can be solved in near-linear time if all weights are nonnegative.
A recent break-through by~\cite{BernsteinNW22} presented a randomized near-linear time algorithm for this problem.
A subsequent improvement in~\cite{BringmannCF23} significantly reduced the number of logarithmic factors and thereby also simplified the algorithm.
It is surprising and exciting that both of these algorithms are combinatorial and do not contain any fundamental obstacles for being practical.

We launch the, to the best of our knowledge, first extensive investigation towards a practical implementation of~\cite{BringmannCF23}.
To this end, we give an accessible overview of the algorithm and discuss what adaptions are necessary to obtain a fast algorithm in practice.
We manifest these adaptions in an efficient implementation.
We test our implementation on a benchmark data set that is adapted to be more difficult for our implementation in order to allow for a fair comparison.
As in~\cite{BringmannCF23} as well as in our implementation there are multiple parameters to tune, we empirically evaluate their effect and thereby determine the best choices.
Our implementation is then extensively compared to one of the state-of-the-art algorithms for this problem~\cite{GOR93}.
On the hardest instance type, we are faster by up to almost two orders of magnitude.
\end{abstract}

\section{Introduction}
One of the most fundamental algorithmic problems on graphs is the single-source shortest path problem: given a source vertex $s$ in the graph, the goal is to compute the distance from $s$ to all other vertices.
This is a problem with a very rich history that goes back to the early years of the field of modern computer science.
This problem has two seemingly fundamentally different variants.
Either all the edge weights are nonnegative, or we are in the general case where the edge weights are also allowed to be negative.\footnote{We indeed disregard the interesting distinction between integer and real edge weights here.}

The first variant, i.e., shortest paths on graphs with nonnegative edge weights, is very well understood.
It is known since the 60s that the celebrated Dijkstra's algorithm solves the problem in near-linear time~\cite{Dijkstra59}.
There is a long line of work (e.g., \cite{BoasKZ77, FredmanT87, AhujaMOT90, CherkasskyGS97}) giving theoretical improvements over Dijkstra's algorithm, culminating in a near-linear time algorithm with only $\log \log$ factors for graphs with integer weights~\cite{Thorup03}.
On the practical side, there also is a lot of interest in and work on the shortest path problems.
In particular, the practically efficient computation of shortest paths in road networks is very well understood, and it can be considered as one of the great success stories of algorithm engineering, see \cite{BastDGMPSWW16} for an excellent overview on this topic.

For the second variant, i.e., shortest paths with negative edge weights, progress turned out to be significantly more challenging. 
Note that as opposed to the shortest path problem with non-negative edge weights, here we also have the problem of cycles of negative total weight, namely, negative cycles.
If such a cycle is reachable from the source, we have shortest paths of arbitrary negative cost and we consequently have to detect these cases.
A classical algorithm that solves the shortest path problem with negative edge weights is the famous Bellman-Ford algorithm~\cite{ford1956, bellman:routing, moore1959shortest}, which was developed in the 50s and remained unbeaten for a long time.
However, the problem continued to be of interest and saw steady progress on the theoretical side, e.g. \cite{Goldberg95, Sankowski05, KleinRRS94}.
Given its fundamental nature, the problem also drew interest from a practical perspective. We refer the reader to~\cite{Pape74, DialGKK79, Pallottino84, GOR93} for some early experimental work. Some more recent works focused on negative cycle detection, also called the shortest-path feasability problem, see e.g.~\cite{CherkasskyG99, ChandrachoodanBL01, CherkasskyGGTW09}.
We refer to \cite{CherkasskyGGTW09} as an excellent resource on practical shortest path algorithms with negative edge weights and negative cycle detection.
Note that shortest path algorithms can also be used as a subroutine for solving min-cost flow problems with the so-called successive shortest path approach~\cite{EdmondsK72}.

Despite the progress in shortest paths with negative edge weights, until very recently all known algorithms were still far from achieving near-linear running time.
In 2022, this barrier was overcome by~\cite{ChenKLPGS22}, who gave an almost-linear time algorithm for the more general min-cost flow problem.
In an independent work, a near-linear time algorithm for shortest paths with negative weights was discovered by~\cite{BernsteinNW22}.
Both of these results constitute stunning theoretical breakthroughs, see also~\cite{CACM_article}.
One disadvantage of the min-cost flow algorithm of~\cite{ChenKLPGS22} is that it is very complex and relies on intricate tools from continuous optimization, which suggests that it is still far from being of practical use.\footnote{There was a very recent attempt \cite{kavi2024partialimplementationmaxflow} to implement the algorithm of \cite{ChenKLPGS22}.
This implementation is only partial, suggesting that indeed it is difficult and the practicability of the theoretical algorithm remains unclear.}
On the other hand, a notable feature of~\cite{BernsteinNW22} is that it does not rely on complex technology.
Indeed, the algorithm is purely combinatorial (it uses tools similar to classical graph decomposition techniques that are already known since the 80s, see e.g. \cite{AwerbuchGLP89}), and none of the techniques that it uses renders it a priori impractical.
This gave rise to the hope of practical near-linear time algorithms for shortest paths with negative edge weights.
This hope was further boosted by a result that reduced the number of logarithmic factors in the running time from 9 to 3 and thereby also resulted in a significantly simpler algorithm~\cite{BringmannCF23}.\footnote{We note that the algorithm was simplified, but the analysis is still challenging and out of the scope of this work to explain.}
We note that in~\cite{BringmannCF23}, the authors explicitly mention the goal of paving the way for a \enquote{comparably fast implementation}.

\subsection*{Our Contribution}

We initiate the transfer of the theoretical break-through on the problem of shortest paths with negative edge weights \cite{BernsteinNW22, BringmannCF23} to the application domain and provide the first implementation inspired by the work of \cite{BringmannCF23}, to the best of our knowledge.
To that end, we first present an overview of the algorithm of \cite{BringmannCF23} to make it easily accessible to a broader audience and especially practitioners.\footnote{Naturally, explaining the running time analysis of the algorithm of \cite{BringmannCF23} is out of the scope of this paper and we merely focus on the algorithm.}
We discuss different adaptions that are necessary to obtain an implementation that is competitive with existing algorithms.
We also show an improved instance-based upper bound on a crucial parameter that controls the recursion depth of the algorithm.
We create a set of benchmark instances that is inspired by the work of \cite{CherkasskyGGTW09}, adapting their benchmark instances to make them more difficult to solve for our implementation in order to allow for a fair comparison.
As the algorithm of \cite{BringmannCF23} contains multiple constants that can be chosen in different ways and our modifications also allow for different parameter settings, we conduct experiments to empirically determine a good parameter choice.
We then conduct extensive experiments to understand the running time and especially also the scaling behavior of our implementation compared to one of the state-of-the-art algorithms, namely \GOR~\cite{GOR93}.
On the hardest instance type, we are faster by up to almost two orders of magnitude, suggesting that our implementation particularly thrives on hard instances.
Indeed, the experiments also suggest that the running time of our implementation scales near-linearly.

We note that Li and Mowry~\cite{DBLP:journals/corr/abs-2411-19449} very recently and independent of our work further simplify the algorithm of~\cite{BringmannCF23}. However, they do not provide an implementation and conduct no experimental evaluation.

\section{Preliminaries} \label{sec:prelims}

We first introduce notation, then explain the type of graphs that we are working on, and finally explain the algorithm of~\cite{BringmannCF23} that we build our engineered solution on.

\subsection{Setting} \label{sec:setting}
In this work, we are always given a directed, weighted graph $G = (V,E,w)$ equipped with a weight function $w: E \mapsto \mathbb{Z}$ that assigns (potentially negative) integer weights to the edges.
With slight abuse of notation, we use $|G|$ to denote the number of vertices of a graph $G$.
Given a subset of vertices $C \subset V$, we denote with $G[C]$ the induced subgraph of $C$ in $G$.
We sometimes consider the graph $G_{\geq 0}$, which we define as the graph $G$ with the modified cost function $w_{\geq 0}(e) = \max\{w(e), 0\}$ where all negative edge weights are set to zero.
We also need access to the reverse graph.
We hence denote with $G^\mathrm{out}$ the original graph $G$, and with $G^\mathrm{in}$ the graph $G$ but where each edge is flipped in its orientation.
We use the notation $B^\mathrm{dir}_{\geq 0}(v,r)$ to denote the set of vertices that have distance at most $r$ from $v$ in $G^\mathrm{dir}_{\geq 0}$, for $\mathrm{dir} \in \{\mathrm{in}, \mathrm{out}\}$.
Given such a ball $B \coloneqq B^\mathrm{dir}_{\geq 0}(v,r)$, we denote its boundary by $\partial B$, which comprises of all the edges that lead from nodes inside the ball to nodes outside the ball.
Finally, given a direction $\mathrm{dir} \in \{\mathrm{in}, \mathrm{out}\}$, we denote by $\overline{\mathrm{dir}}$ the opposite direction.

The goal of this work is to solve the following problem efficiently in practice:
\begin{definition}[SSSP with Negative Weights]
Given a directed graph $G = (V,E,w)$ equipped with a weight function $w: E \mapsto \mathbb{Z}$ and a source vertex $s \in V$, compute the shortest path costs from $s$ to all vertices in $V$ w.r.t.\ the weights $w$ or assert that $G$ contains a negative cycle.
\end{definition}
SSSP with Negative Weights reduces to finding a \emph{potential function} $\phi: V \mapsto \mathbb{Z}$ such that the modified weights $w(u,v) + \phi(u) - \phi(v)$ of all edges $(u,v) \in E$ are nonnegative; we call such a potential function \emph{valid}.
This is also called \enquote{Johnson's Trick}~\cite{Johnson77}.
There are three important facts about such potential functions:
\begin{itemize}
    \item a valid potential function $\phi$ always exists iff $G$ does not contain a negative cycle, and
    \item $\phi$ preserves the structure of the shortest paths in $G$, and
    \item given shortest path costs w.r.t.\ $\phi$, we can easily reconstruct the original costs.
\end{itemize}
Especially, if we found a valid potential function $\phi$, we can just apply Dijkstra's algorithm~\cite{Dijkstra59} on $G$ with the modified weights and hence obtain our shortest path distances in the graph with the original weights.

\subsection{Restricted Graphs}

Consider any cycle $C = e_1, \dots, e_\ell$ in $G$, its \emph{mean} is defined as $\sum_{i \in [\ell]} w(e_i) / \ell$.
The \emph{minimum cycle mean} is defined as the minimum mean any cycle in $G$ has.
In this work we focus on restricted graphs.
\begin{definition}
    A graph $G = (V, E, w)$ is restricted if $w(e) \geq -1$ for all $e \in E$, and the minimum cycle mean of $G$ is at least 1.
\end{definition}
Note that this definition also implies that $G$ has no negative cycle.
Furthermore, the original definition also includes the existence of a super source, due to technical reasons.
We drop this requirement.

We mostly restrict to these types of graphs in this work (unless mentioned otherwise), due to the following reasons:
\begin{itemize}
    \item 
    The main contribution of~\cite{BringmannCF23} is an algorithm that runs on restricted graphs. How to generalize such an algorithm to the general case is technical, but comparably simple, and was already previously known.
    Hence, restricting to these types of graphs enables for a cleaner evaluation of how their techniques perform in practice.
    \item Most of our benchmark graphs are derived from the hard instances described in~\cite{CherkasskyGGTW09}.
    These graphs are restricted or almost restricted and thus, even if we adapt our implementation for the general case, we expected the performance on these graphs to not significantly change.
    \item The restriction does not have any effect on the correctness of our implementation: even if our implementation is given a non-restricted graph, it still outputs the correct result.\footnote{This is the case as Line~\ref{l:lazy_dijkstra_call} in Algorithm~\ref{alg:restricted_bcf} computes the correct result also on non-restricted graphs.}
    Our implementation is merely not optimized for these cases and might have higher running times.
    In particular, if the graph contains a negative cycle, it still will eventually be detected by our implementation.
\end{itemize}

While our implementation can easily be adapted to general negative-cycle-free graphs by the scaling technique of~\cite{BringmannCF23}, we note that the negative cycle detection of~\cite{BringmannCF23} is impractical as it works via budgeting --- a technique that, very roughly speaking, given the theoretical worst-case running time of the algorithm on well-formed inputs, will stop if it is exceeded.
We discuss negative cycles more in Section~\ref{sec:implementation}.

\subsection{Near-Linear Time Algorithm} \label{sec:bcf}
In this subsection we give a description of~\cite[Algorithm~2]{BringmannCF23}.
For the readers' convenience, we refer to the statements and algorithms in the freely available full version~\cite{BCF23-arxiv} of~\cite{BringmannCF23}.
Note that in the algorithm of~\cite{BringmannCF23} a designated source $s \in V$ is added and connected to all vertices via zero-weight edges.
In our description and our implementation, we replace the explicit source node by an implicit source.
Computing a shortest path tree from this implicit source then gives us a valid potential function, and we can subsequently run Dijkstra's algorithm from any source with the modified weights, see \Cref{sec:setting}.

\subsubsection{Subroutines}
Before describing the main algorithm, we have to describe several important subroutines.
Note that most of these subroutines get a potential function $\phi$ as input.
This is due to the fact that the main algorithm recursively creates new potential functions, fixing more and more of the graph until we found a valid one.

\paragraph*{\LazyDijkstra}
We first describe an algorithm that is efficient in the case where each path in the shortest path tree contains only few negative edges.
This subroutine is called \LazyDijkstra, see also \cite[Lemma 3.3]{BernsteinNW22} and \cite[Lemma 25]{BCF23-arxiv}.
The algorithm intuitively is Dijkstra's algorithm interleaved with Bellman Ford edge relaxations of all the relevant negative edges, see \Cref{alg:lazy_dijkstra}.

\begin{algorithm}[t]
\caption{\LazyDijkstra} \label{alg:lazy_dijkstra}
\begin{algorithmic}[1]
\Procedure{\LazyDijkstra}{$G, \phi$}
\State $d \gets$ distance array used in the following
\State \textbf{for each} $v \in V$ \textbf{do} $d(v) \gets -\phi(v)$
\State $Q \gets$ priority queue containing each $v \in V$ with its initial distance $-\phi(v)$
\While{$Q \neq \emptyset$}
\State Dijkstra's alg.\ on edges in $E$ with non-negative cost w.r.t.\ $\phi$, updating $d$ and $Q$
\State $A \gets$ vertices that were improved (or newly initialized)
\State One round of Bellman-Ford from $A$ using $\phi$ and updating $d$
\State $Q \gets$ vertices that were improved
\EndWhile
\State \Return $\phi + d$ %
\EndProcedure
\end{algorithmic}
\end{algorithm}

\paragraph*{\SSSPDag}
Now we describe the subroutine \SSSPDag, see also \cite[Lemma 3.2]{BernsteinNW22} and \cite[Lemma 26]{BCF23-arxiv}, see Algorithm~\ref{alg:fix_dag_edges}.
This subroutine takes a set of strongly connected components $C_1, \dots, C_\ell$ in their topological ordering according to $G$ as input, and we are guaranteed that for each $i \in [\ell]$ the edges within $C_i$ are non-negative with respect to $\phi$.
This subroutine finds a new potential function such that all edges become non-negative.
The idea to achieve this is simple: If we contract each $C_i$ to a single vertex in $G$, then we obtain a DAG.
Hence, we can simply assign large negative potentials that decrease (i.e., become more negative) the later the component is in the topological ordering.

\begin{algorithm}[t]
\caption{\SSSPDag} \label{alg:fix_dag_edges}
\begin{algorithmic}[1]
\Require{$C_1, \dots, C_\ell$ given in topological ordering, and edge costs are non-negative within all $C_i$ w.r.t. $\phi$}
\Procedure{\SSSPDag}{$G, C_1, \dots, C_\ell, \phi$}
    \State $M \gets \min_{(u,v) \in E} \{w(u,v) + \phi(u) - \phi(v), 0\} - 1$
    \For{$i \gets 1, \dots, \ell$}
        \State \textbf{for} $v \in C_i$ \textbf{do} $\phi(v) \gets \phi(v) + i \cdot M$
    \EndFor
    \State \Return $\phi$
\EndProcedure
\end{algorithmic}
\end{algorithm}

\paragraph*{\Decompose} \label{parag:decompose}
This routine is at the heart of the near-linear time algorithm by~\cite{BCF23-arxiv}, see~\cite[Section 3.1]{BCF23-arxiv}.
We provide the pseudo-code in Algorithm~\ref{alg:decompose}.
At a high level, the goal of \Decompose is to decompose the graph into small pieces by removing a small number of cut edges from the graph.
The decomposition happens based on cutting out balls whose radii are expected to be in the order of~$\kappa/\log n$, where $\kappa$ is a parameter that we explain and set in Section~\ref{subsubsec:alg}.
To achieve the goal, we want to grow the balls around vertices whose vicinity is relatively small.
These are so-called light vertices, whose $\frac{\kappa}{4}$-ball contains at most a $\frac{3}{4}$-fraction of the vertices of $G$.
However, classifying each vertex exactly is too costly, so instead we estimate the set of light vertices.
This is the idea behind the set of vertices $L$ that we compute in Lines~\ref{l:start_light}~to~\ref{l:end_light} of \Cref{alg:decompose}.
To this end, we randomly sample vertices, growing a ball of radius $\frac{\kappa}{4}$ around them in opposite direction, and marking all the vertices that are contained in this ball.
This is done $k$ times, where we choose $k$ to be $50 \log n$.
The set $L$ then consists of the vertices that were marked less than $\frac{3}{5}k$ times.

Note that this subroutine introduces randomization into the algorithm as the radius of the balls is sampled from a geometric distribution, where $\mathrm{Geom}(p)$ denotes the geometric distribution with mean $1/p$.
Furthermore, in our description of \Decompose, we use the constant factors of our implementation instead of the ones from~\cite{BringmannCF23}.

\begin{algorithm}[t]
\caption{\Decompose} \label{alg:decompose}
\begin{algorithmic}[1]
\Procedure{\Decompose}{$G, \kappa$}
\State $S \gets \emptyset$
\For{$\mathrm{dir} \in \{\text{in}, \text{out}\}$}
\State $k \gets \lceil 50 \log |G| \rceil$ \label{l:start_light}
\For{$i = 1, \dots, k$}
    \State $v \gets $ random vertex from $G$
    \State Mark all vertices in $B^{\overline{dir}}_{\geq 0}(v, \kappa/4)$
\EndFor
\State $L \gets$ vertices marked less than $\frac{3}{5}k$ times \label{l:end_light}
\While{there is a $v \in L$}
\State $B \gets B_{\geq 0}^{\mathrm{dir}}(v, \mathrm{Geom}(20 \log(|G|) / \kappa))$
\State $S \gets S \cup \partial B$, $L \gets L \setminus B$, and $G \gets G \setminus B$
\EndWhile
\EndFor
\State $C_1, \dots, C_\ell \gets$ topol.\ sorted SCCs of $G \setminus S$
\State \Return $C_1, \dots, C_\ell, S$
\EndProcedure
\end{algorithmic}
\end{algorithm}

\begin{algorithm}[t]
\caption{Algorithm for restricted graphs from~\cite{BringmannCF23}} \label{alg:restricted_bcf}
\begin{algorithmic}[1]
\Procedure{RestrictedSSSP}{$G, \phi, \kappa$}
\If{$\kappa \leq 2$}
    \State \Return \Call{\LazyDijkstra}{$G, \phi$}
\EndIf
\State $C_1, \dots, C_\ell, S \gets$ \Call{\Decompose}{$G, \kappa$}
\For{each component $C_i$}
\State \textbf{if} $|C_i| \geq \frac{3}{4} |G|$ \textbf{then} $\kappa_i \gets \kappa/2$ \textbf{else} $\kappa_i \gets \kappa$ \label{l:progress}
\State $\phi \gets \text{\Call{RestrictedSSSP}{$G[C_i], \phi, \kappa_i$}}$
\EndFor
\State $\phi \gets \text{\Call{\SSSPDag}{$G \setminus S, C_1, \dots, C_\ell, \phi$}}$
\State $\phi \gets \text{\Call{\LazyDijkstra}{$G, \phi$}}$ \label{l:lazy_dijkstra_call}
\State \Return $\phi$
\EndProcedure
\end{algorithmic}
\end{algorithm}

\subsubsection{Algorithm} \label{subsubsec:alg}
We can now describe the main algorithm on restricted graphs from~\cite{BringmannCF23}.
See \Cref{alg:restricted_bcf} for the pseudo code.
Note that we use $|G|$ here to denote the number of vertices of the graph $G$, and also that the algorithm is recursive and calls itself on the graphs of components induced by the separators found by Algorithm~\ref{alg:decompose}.

We focus here on the intuitive understanding of the described algorithm; for the in-depth description, see~\cite[Section 3]{BCF23-arxiv}.
The algorithm is recursive and we make progress in recursive calls by either reducing the component size or by halving the parameter $\kappa$, see Line~\ref{l:progress} in \Cref{alg:restricted_bcf}, the base case being $\kappa \leq 2$.
We shed light on the role of $\kappa$ in a moment.
However, first consider a single recursive step.
We can categorize the negative edges into three classes after the \Decompose call: (i) edges within strongly connected components $C_i$, (ii) cut edges $S$, and (iii) other edges between the strongly connected components.
Type (i) edges are handled recursively.
Type (ii) edges could be hard to handle, but it follows from the theoretical analysis that they are expected to be few, thus, we expect to find a valid potential function for them efficiently using \LazyDijkstra. 
Type (iii) edges are DAG edges when considering the components as being contracted into a vertex. Hence, we easily find a valid potential function for them using the \SSSPDag subroutine.

Let us now explain the crucial role of the parameter~$\kappa$.
The value $\kappa(G)$ of a given restricted graph $G$ is the largest number of negative edges in any simple path with non-positive cost.
This intuitively measures the complexity of the shortest path problem incurred by the negative edges.
The parameter $\kappa$ is supposed to be an upper bound on $\kappa(G)$.
If $\kappa(G)$ is small, then we can efficiently solve the shortest path problem using \LazyDijkstra.
This is why halving $\kappa$ intuitively makes progress --- we expect to be in a less complex case.

\section{Implementation} \label{sec:implementation}

We now describe our implementation and different algorithmic choices.

\subsection{Our Algorithm} \label{subsec:ouralg}
As the goal of this work is to understand the practicability of the algorithmic approach of~\cite{BringmannCF23}~and~\cite{BernsteinNW22}, we use the structure described in \Cref{sec:bcf}, and in particular \Cref{alg:restricted_bcf}, as basis and modify it to obtain a practically faster algorithm.

As a first preprocessing step, we decompose the graph into strongly connected components.
This allows us to run our algorithm separately on each component and later obtain a valid potential function that also makes the edges in-between the components positive using \SSSPDag.
To this end, we also have to choose one of the many algorithms to compute strongly connected components.
We use Kosaraju's and Sharir's algorithm~\cite{Sharir81}, which also gives us a topological sorting of the strongly connected components as a by-product.

A subroutine that is pervasive in \cite{BringmannCF23} is Dijkstra's algorithm.
In~\cite{BringmannCF23} the authors use Dijkstra's algorithm with a priority queue by Thorup~\cite{Thorup03} with the goal to shave an additional logarithmic factor.
We instead use a classical priority queue, more precisely, we use a 4-heap that has proven to be fast in practice in shortest path problems with non-negative edge weights, see e.g. \cite{GoldbergKW06}.

A critical parameter that controls the size of the components that we recurse on and thereby also the recursion depth is $\kappa$ --- hence choosing $\kappa$ well can have significant influence on the practical running time.
In Section~\ref{sec:theory}, we show that using $\mathrm{diam}(G_{\geq 0})$ is an upper bound to $\kappa$ on any restricted graph $G$.
As $\mathrm{diam}(G_{\geq 0})$ could be greater than $n$, we set $\kappa = \min\{n, \mathrm{diam}(G_{\geq 0})\}$ in our implementation.
It is also possible that the graph contains only very few negative edges in total, so we could additionally have the total number of negative edges in $G$ as third argument in the $\min$ above.
However, as all the instances that we run our experiments on contain a significant amount of negative edges, we do not use this optimization in the experiments.

The computation of the estimated set of light vertices $L$ in \Decompose is an important factor in the algorithm.
To reduce the running time, we can reduce the number of randomly sampled vertices that we use to estimate the lightness.
While the choice in the work of~\cite{BringmannCF23} is aimed at ensuring a high probability of success, it is conceivable that on practical instances we need less vertices for a good estimation.
To this end, we introduce a parameter $K \in \mathbb{N}$ to our algorithm such that we choose $1/K$-times as many vertices to estimate the lightness in the \Decompose routine.

We also have to decide at which point we want to invoke the base case and call \LazyDijkstra.
In \cite{BringmannCF23} this is done when $\kappa \leq 2$, see \Cref{alg:restricted_bcf}.
While this is theoretically guaranteed to occur within a logarithmic number of recursive calls with sufficiently high probability, in practice this can slow down the implementation significantly.
Hence, we settled on calling the base case if $n + \kappa \leq 300$.
We use the sum as we want to only call the base case if both $\kappa$ and $n$ are small.
Note that due to how we set $\kappa$, we also always have $\kappa \leq n$.

Finally, our implementation also performs negative cycle detection, however, this is not its strong suit.
In~\cite{BringmannCF23} negative cycle detection is done via budgeting, i.e., when the algorithm exceeds a specific running time, then the existence of a negative cycle is reported; this technique is impractical and we do not use it.
For an alternative approach, note that the only cases where negative cycles can occur in \Cref{alg:restricted_bcf} are the \LazyDijkstra calls.
Hence, we add negative cycle detection to \LazyDijkstra by merely checking the number of distance improvements that any vertex sees.
If this number exceeds the number of vertices, then we detected the existence of a negative cycle and we report this.
How to adapt our implementation to also perform efficient negative cycle detection is an intriguing task for future work.\footnote{We note that in a very recent parallel and independent work that appeared on arXiv, an improved cycle detection is presented, but this work does not present any implementation or experiments~\cite{DBLP:journals/corr/abs-2411-19449}.} 

Let us consider what asymptotic running time we can hope for for our implementation.
Note that we use the wording \enquote{hope for} on purpose, as the above modifications can potentially break the theoretical guarantees and our statements should be considered an intuitive sketch.
The theoretically expected running time for the version of \Cref{alg:restricted_bcf} in \cite{BringmannCF23} is $O((m + n \log \log n) \log^2 n)$, where $n$ and $m$ are the number of vertices and edges in the input graph, respectively.
We perform one crucial modification, that is, we use a $4$-heap instead of the priority queue by~\cite{Thorup03}.
This replaces the $\log \log n$ factor in the above running time by a $\log n$ factor.
Hence, we can hope for an $O((m+n \log n) \log^2 n)$ running time.

\subsection{Other Algorithmic Choices} \label{subsec:other}

In our implementation, following~\cite{BringmannCF23}, we use \LazyDijkstra as algorithm to compute a valid potential function on graphs where each shortest path should only contain a small number of negative edges.
This happens in the base case of \Cref{alg:restricted_bcf} as well as in the case of obtaining a valid potential function for the cut edges in-between components.
Instead of \LazyDijkstra we could actually employ any other shortest path algorithm for graphs with negative edge weights that is fast in practice.
Especially, we can also use the baseline algorithm that we later compare to in Section~\ref{sec:experiments} as subroutine.

\section{Theoretical Insights} \label{sec:theory}

In \Cref{alg:restricted_bcf}, we either make progress by reducing the component size or by reducing $\kappa$. In a restricted graph we know that $\kappa \leq n$ as this is the maximal length of a simple path. This is sufficient in theory, but in practice we prefer a tighter and instance-dependent upper bound on $\kappa$. The following lemma establishes an upper bound on $\kappa$ that comes from the diameter of $G_{\geq 0}$.

\begin{lemma}[Diameter upper bound] \label{lem:diam_ub}
    Given a restricted graph $G$ that consists of a single strongly-connected component, we have that $\kappa(G) \leq \mathrm{diam}(G_{\geq 0})$.
\end{lemma}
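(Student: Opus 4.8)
The plan is to fix an arbitrary simple path $P$ in $G$, say from a vertex $u$ to a vertex $v$, with $w(P) \le 0$, and to let $k$ be its number of negative edges; it then suffices to prove $k \le \mathrm{diam}(G_{\geq 0})$, since $P$ is arbitrary and $\kappa(G)$ is by definition the maximum such $k$. The core idea is to ``close up'' $P$ into a closed walk by appending a cheap return path and then to play off the lower bound coming from the minimum-cycle-mean hypothesis against the upper bound coming from the diameter.

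Concretely, I would first use that $G$ is a single strongly connected component — hence so is $G_{\geq 0}$, which has the same edge set — to choose a \emph{shortest} path $Q$ from $v$ back to $u$ in $G_{\geq 0}$. By definition of the diameter, $w_{\geq 0}(Q) = \mathrm{dist}_{G_{\geq 0}}(v,u) \le \mathrm{diam}(G_{\geq 0})$, and since $w(e) \le w_{\geq 0}(e)$ for every edge, also $w(Q) \le \mathrm{diam}(G_{\geq 0})$. Second, the concatenation of $P$ and $Q$ is a closed walk $W$ with $\ell(W)$ edges and total weight $w(P)+w(Q)$, writing $\ell(\cdot)$ for the number of edges. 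Decomposing $W$ edge-wise into simple directed cycles and using that every cycle $C$ of $G$ has mean at least $1$, i.e.\ $w(C) \ge \ell(C)$, and summing over the decomposition gives $w(P)+w(Q) \ge \ell(P)+\ell(Q)$. Since $w(P)\le 0$ and $\ell(P)$ is at least the number $k$ of negative edges of $P$, this yields $w(Q) \ge \ell(P)+\ell(Q) \ge \ell(P) \ge k$.

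Chaining the two bounds gives $k \le w(Q) \le \mathrm{diam}(G_{\geq 0})$, which is what we want. The step I expect to be the only real point to be careful about is the cycle decomposition: one must justify that a closed walk — which may repeat vertices and edges — can be written as a disjoint union (in the multiset sense) of simple directed cycles, so that the per-cycle inequality $w(C) \ge \ell(C)$ can be summed without double counting. This is a classical fact (a nonempty edge multiset that is in-degree/out-degree balanced at every vertex contains a directed cycle; delete it and induct), so it should not cause genuine trouble. The remaining ingredients are a short chain of inequalities; the only other thing worth stating explicitly is that $\mathrm{dist}_{G_{\geq 0}}(v,u)$ is finite and non-negative, which is immediate from strong connectivity and non-negativity of $w_{\geq 0}$.
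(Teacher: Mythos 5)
Your proposal is correct and follows essentially the same route as the paper's proof: close the non-positive path with a shortest return path in $G_{\geq 0}$, lower-bound the weight of the resulting closed walk via the minimum-cycle-mean condition, and upper-bound the return path's weight by $\mathrm{diam}(G_{\geq 0})$. The only difference is cosmetic: you make explicit the decomposition of the closed walk into simple cycles, a step the paper leaves implicit when asserting $w(P \circ P', G) \geq |P| + |P'|$.
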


\begin{proof}
We denote by $w(P,G) \coloneqq \sum_{e \in P} w_G(e)$ the cost of the path $P$ in graph $G$.
Let $P$ be the path from $u$ to $v$ in $G$ that realizes $\kappa(G)$.
Hence, $P$ has $\kappa(G)$ negative edges and $w(P,G) \leq 0$.
Let $P'$ be the shortest path from $v$ to $u$ in $G_{\geq 0}$.
Due to $G$ being restricted (and hence having minimum cycle mean at least 1) and as $P$ has $\kappa(G)$ negative edges, we know that
\[
    w(P \circ P', G) \geq |P| + |P'| \geq \kappa(G).
\]
Using the above statement and that $w(P,G) \leq 0$, we obtain that the diameter of $G_{\geq 0}$ is at least
\[
w(P', G_{\geq 0}) \geq w(P', G) \geq w(P \circ P', G) \geq \kappa(G),
\]
hence $\kappa(G) \leq \mathrm{diam}(G_{\geq 0})$.
\end{proof}

Note that it suffices to upper bound $\kappa(G)$ on strongly connected components as we only ever use Algorithm~\ref{alg:restricted_bcf} on strongly connected graphs, see \Cref{sec:implementation}.

\section{Experiments} \label{sec:experiments}

We now experimentally evaluate our implementation.
To that end, we had the following challenges to solve:
\begin{itemize}
    \item What algorithm(s) do we compare to? (Section~\ref{sec:competitor_algorithm})
    \item What data do we perform experiments on? (Section~\ref{sec:data})
    \item What do we parameterize in our algorithm and how do we choose these parameters? (Section~\ref{subsec:parameters})
\end{itemize}
In the following, we refer to our implementation as \BCF.
We conduct each experiment 5 times and present the average and the standard error of the mean, unless noted otherwise.
For the quantitative evaluation of our experimental data, we use regression to the model $a \cdot m^b$ by applying the common log-log-transform before an ordinary least squares fit from \texttt{statsmodels (v.0.14.2)} in Python. We report the errors of the slopes with a confidence interval of $95\%$. The error bars of single data points are reported with $1$ standard error of the mean.

\subsection{Code \& Hardware}

We wrote the entire experimental framework and all the algorithms in C++20. We compiled our code using GCC on Debian.
Our implementation is publicly available\footnote{\url{https://anonymous.4open.science/r/negative-weight-shortest-path}
} and it will remain public (in deanonymized form) in case of acceptance of the paper.
We ran the experiments on a server with 48 Intel Xeon E5-2680 v3@2.50GHz CPUs with 256 GB of RAM.
We note that we did not parallelize our implementation as our comparison is with a single-threaded implementation, but parallelization of our implementation is straight-forward (the recursive calls are independent).

\subsection{Baseline Algorithm} \label{sec:competitor_algorithm}
We compare our implementation against a state-of-the-art algorithm by Goldberg and Radzik~\cite{GOR93} called \GOR.
More specifically, we use the implementation in SPLIB\footnote{See \url{https://www3.cs.stonybrook.edu/~algorith/implement/goldberg/distrib/splib.tar}. We modified it such that we can use it with our graph data structure and compile it within our implementation. We ensured that this did not incur a significant slowdown.}, which was developed by Cherkassky, Goldberg, and Radzik.
We also implemented the algorithm called \BFCT from~\cite{CherkasskyGGTW09} (a previous version was described in \cite{CherkasskyG99}). However, as our implementation was consistently outperformed by \GOR, we only compare to \GOR here.
Our reason for choosing \GOR is twofold:
\begin{itemize}
    \item In~\cite{CherkasskyGGTW09}, the authors show that \GOR has overall the fastest average running times for scans of a single vertex and a good performance on all instances regarding the number of scans required to compute a shortest path tree, without large deviations.
    \item A highly-tuned implementation of \GOR to compare to is publicly available.
\end{itemize}

\subsection{Data} \label{sec:data}
In the following, we present a brief overview of all the different types of instances we perform experiments on. For completeness, we give the full details in \cref{sec:app_instances}.

\subparagraph*{\bad instances from~\cite{CherkasskyGGTW09}} \label{sec:bad}
These instances are worst-case instances that are take from~\cite{CherkasskyGGTW09} and presented in \cref{sec:app_bad_instances}: \badbfct, \baddfs, \badgor, \badrdone, and \badrdtwo
.
We take these instances as each of them is constructed to be pathological for a specific algorithm for shortest paths with negative edge weights. They are all DAGs, and the algorithms that they are adversarial constructions for are different variants of Bellman-Ford.

\subparagraph*{Augmented \bad instances} \label{sec:bad_aug}
As all of the \bad instances are DAGs, and our implementation specifically exploits DAG structure\footnote{We first decompose the graph into strongly connected components, which are singletons in the case of a DAG, and we then call \SSSPDag on the set of components, which is a simple linear-time algorithm. Hence, our implementation degenerates to a simple algorithm to solve the shortest path problem on DAGs with negative edge weights.}, a comparison merely on the \bad instances would not be reasonable and fair.
However, these instances still reveal an interesting structure and the work of~\cite{CherkasskyGGTW09} is the most comprehensive overview of shortest path algorithms on graphs with negative edge weights that we are aware of.
Hence, we still use these instances, but to ensure that these instances are also interesting and difficult for our implementation, we perform two crucial modifications (denoting these instances by pre-fixing them with \texttt{AUG}, e.g., \auggor).
First, we randomly permute the vertex labels. Second, we augment the graphs with 5 times as many (random) edges as the original graph, not introducing multi-edges, and assigning them a large weight to not change the shortest path structure.
These modifications ensure that the minimum cycle mean is at least 1, and all instances except the modified \badgor also fulfill the property that all edge weights are at least~-1.

\subparagraph*{\shiftgor}
These instances result from studying the behavior of \BCF with \GOR used as subroutine instead of \LazyDijkstra (see Section~\ref{subsec:other}) on the \auggor instances.
Interestingly, we found that the potential shifts that our implementation creates (after the \SSSPDag step) greatly deteriorates the performance of \GOR, creating a novel, non-trivial, very hard instance for \GOR.

\subparagraph*{\rand instances} We use these instances to analyze the behavior of \BCF on random restricted graphs.
How to generate such instances is not obvious as we want the graph to have edges with weight $-1$ to make it interesting, while ensuring a minimum cycle mean of at least $1$.
To this end, we create a random graph with valid minimum cycle mean and then iteratively perform potential shifts on shortest path trees, to finally do a weight shift to create negative edges.

\subparagraph*{\USA instances}
We also test our implementation on the \enquote{Full USA} distance graph instance from the 9th DIMACS challenge\footnote{\url{http://www.diag.uniroma1.it/~challenge9/}}.
To obtain a graph with negative edge weights for our experiments, we perform a potential shift using a shortest path tree and random potential shifts at vertices within the range $[0, W]$ for some $W$.
We thereby obtain edge weights which are at least $-W$.

\subsection{Parameters} \label{subsec:parameters}

As discussed in \Cref{sec:implementation}, there are several choices to be made and parameters to be set.
We list the interesting and non-obvious choices and parameters here and empirically evaluate them to understand what good choices are.
Note that we set and evaluated the parameters in the order described here, i.e., the best choice of the first mentioned parameter is already used in the subsequent experiments.
Hence, we chose the order below carefully and on purpose.

As described in Section~\ref{subsec:ouralg}, we can change the number of random vertices that we use to estimate the lightness in \Cref{alg:decompose}.
To that end, our implementation has a parameter $K \in [1, \infty]$ such that we sample $\max\{1, k/K\}$ random vertices, where $k$ is the theoretical number of vertices as used in \Cref{alg:decompose}.
We tested \BCF with $K \in \{1, 5, 10, 20, 40, \infty\}$ on several large instances, see Table~\ref{tab:app_k-factor}.
The speed-up induced by this parameter is significant; it is up to almost a factor of 4.
We notice that the worst choice is $K = 1$, i.e., the original value that is used in \Cref{alg:restricted_bcf}.
There is no clear best choice between $K$ being $20$, $40$, or $\infty$.
To strike a balance, we use $K = 40$ in our subsequent experiments.

Next we consider whether we should use the improved upper bound on $\kappa$ in our implementation, see Section~\ref{sec:theory}~and~Section~\ref{subsec:ouralg}.
We call this method the \emph{diameter upper bound} here.
In almost all of our experiments we noticed a similar scaling behavior with the diameter upper bound and without it.
On one hand, on small instances computing the diameter sometimes poses a significant running time overhead and makes it a bit slower in total.
On the other hand, on the \rand instances, using the diameter upper bound leads to a better scaling behavior and a speed-up of almost an order of magnitude, see Figure~\ref{fig:app_diam-apprx-rand}.
Hence, despite small disadvantages on some instances, we decide to use it as it can massively speed-up some cases.

Finally, as discussed in Section~\ref{subsec:ouralg}, we can also replace \LazyDijkstra by any other algorithm that performs shortest path computations on graphs with negative edge weights.
The obvious replacement to evaluate is \GOR.
While the speed-up is minor when using \GOR as replacement, the slowdown is massive on \auggor instances, see Figure~\ref{fig:app_use-lazy-gor}.
Hence, we conclude that using \LazyDijkstra as base algorithm is the more robust choice.

\begin{table*}[h!]
    \caption{$K$ on different instances with $2 \cdot 10^{7}$ edges. Time is expressed in seconds and rounded to full seconds.
    }
    \centering
\begin{tabular}{ccccccc}
\toprule
          &      $K = 1$ &      $K = 5$ &     $K = 10$ &            $K = 20$ &             $K = 40$ &        $K = \infty$ \\
\midrule
  \baddfs &  $124 \pm 5$ &   $48 \pm 3$ &   $56 \pm 2$ &          $39 \pm 2$ &  $\mathbf{35 \pm 1}$ &          $38 \pm 2$ \\
 \badbfct &  $168 \pm 6$ &   $62 \pm 1$ &   $57 \pm 4$ & $\mathbf{48 \pm 1}$ &           $52 \pm 6$ &          $50 \pm 5$ \\
\badrdone &  $164 \pm 5$ &   $62 \pm 1$ &   $64 \pm 2$ &          $48 \pm 2$ &           $49 \pm 3$ & $\mathbf{43 \pm 2}$ \\
\badrdtwo &  $126 \pm 4$ &   $51 \pm 3$ &   $47 \pm 3$ &          $40 \pm 1$ &  $\mathbf{37 \pm 1}$ & $\mathbf{37 \pm 1}$ \\
  \badgor & $260 \pm 20$ & $270 \pm 20$ & $250 \pm 30$ &        $230 \pm 30$ & $\mathbf{203 \pm 3}$ &        $240 \pm 20$ \\
\bottomrule
\end{tabular}
    \label{tab:app_k-factor}
\end{table*}

\begin{figure}[h!]
\begin{minipage}{.47\textwidth}
    \centering
    \includegraphics[width=\linewidth]{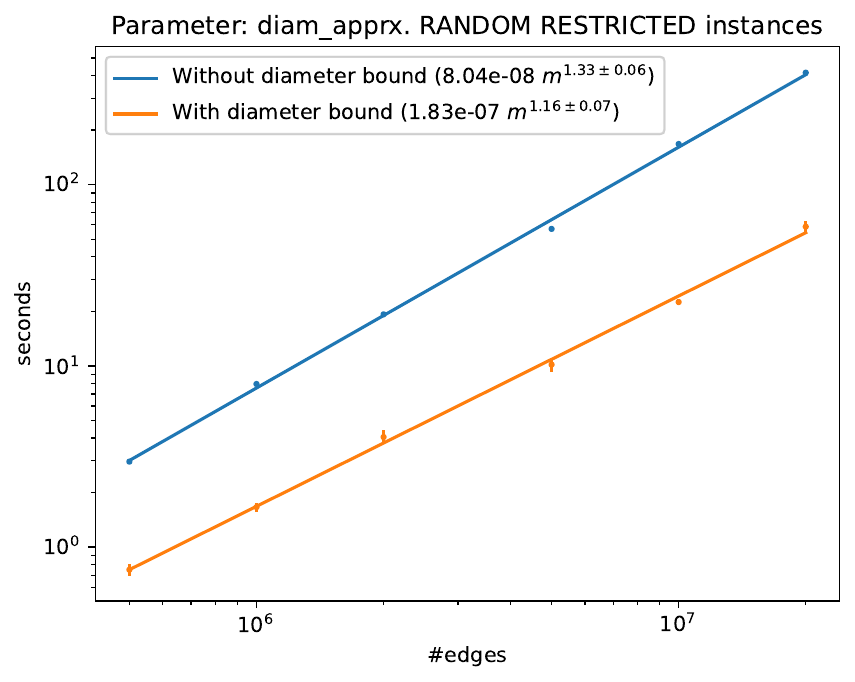}
    \caption{Performance of \BCF with and without the diameter upper bound on restricted random instances (see Section~\ref{sec:data}). \paolo{Is it fine to leave here diam apprx as parameter in the title (it is what we use in the code)? I can easily regenerate these pictures.} \andre{Doesn't matter as it's only appendix.}}
    \label{fig:app_diam-apprx-rand}
\end{minipage}
\hfill
\begin{minipage}{.47\textwidth}
    \centering
    \includegraphics[width=\linewidth]{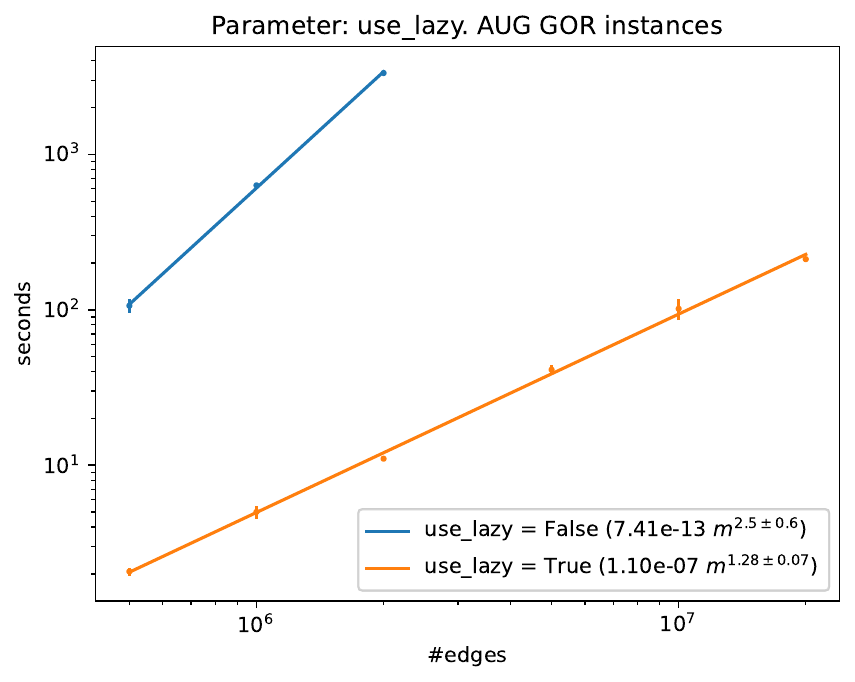}
    \caption{Performance of \BCF with \LazyDijkstra and \GOR on \auggor (see Section~\ref{sec:data}). This plot is the reason why we investigated on \shiftgor. \paolo{Should I say ``Use LazyD'' and ``Use GOR'' instead? If yes, how should I call the parameter in the title of this picture? Simply ``Subroutine''} \andre{Doesn't matter as it's only appendix.}}
    \label{fig:app_use-lazy-gor}
\end{minipage}
\end{figure}

\begin{figure*}[t!]
    \centering
    \begin{subfigure}[t]{.47\linewidth}
    \includegraphics[width=\linewidth]{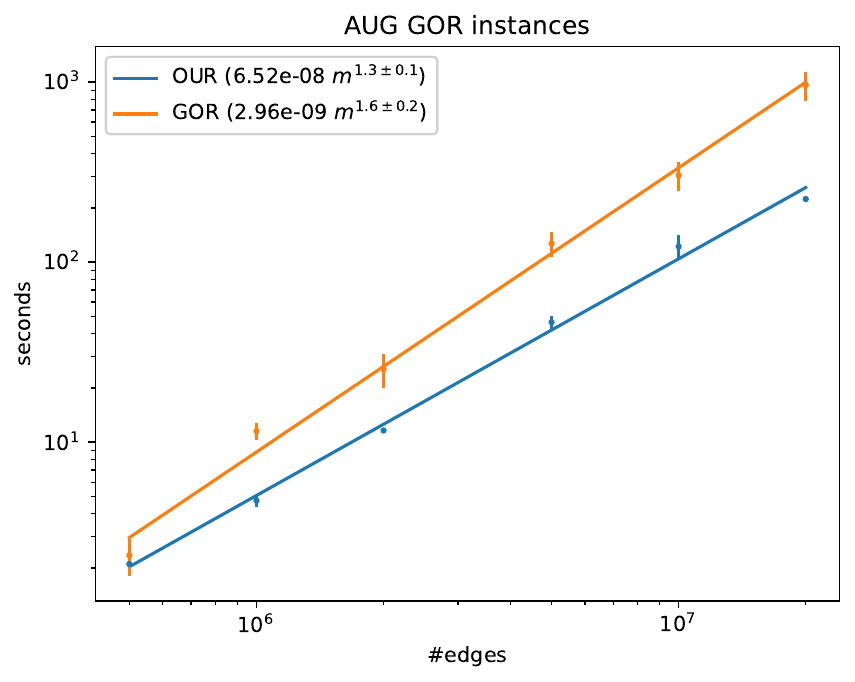}
    \caption{Performance of \BCF vs \GOR on \auggor instances (see Section~\ref{sec:data}).
    }
    \label{fig:gor-bcf-vs-gor}
    \end{subfigure}
    \hfill
    \begin{subfigure}[t]{.47\linewidth}
    \includegraphics[width=\linewidth]{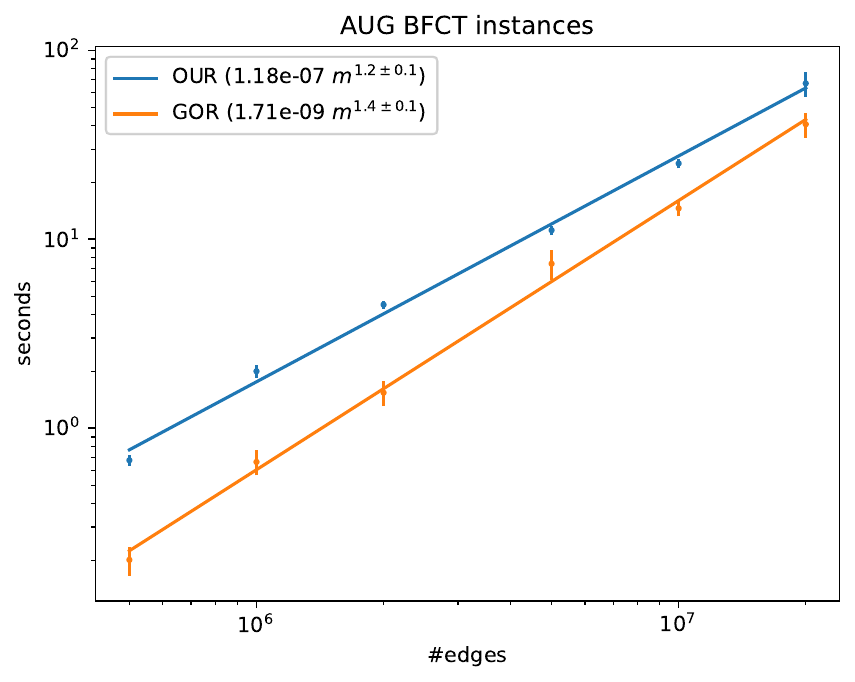}
    \caption{Performance of \BCF vs \GOR on \augbfct instances (see Section~\ref{sec:data}).}
    \label{fig:bfct-bcf-vs-gor}
    \end{subfigure}
    
    \begin{subfigure}[t]{.47\linewidth}
    \includegraphics[width=\linewidth]{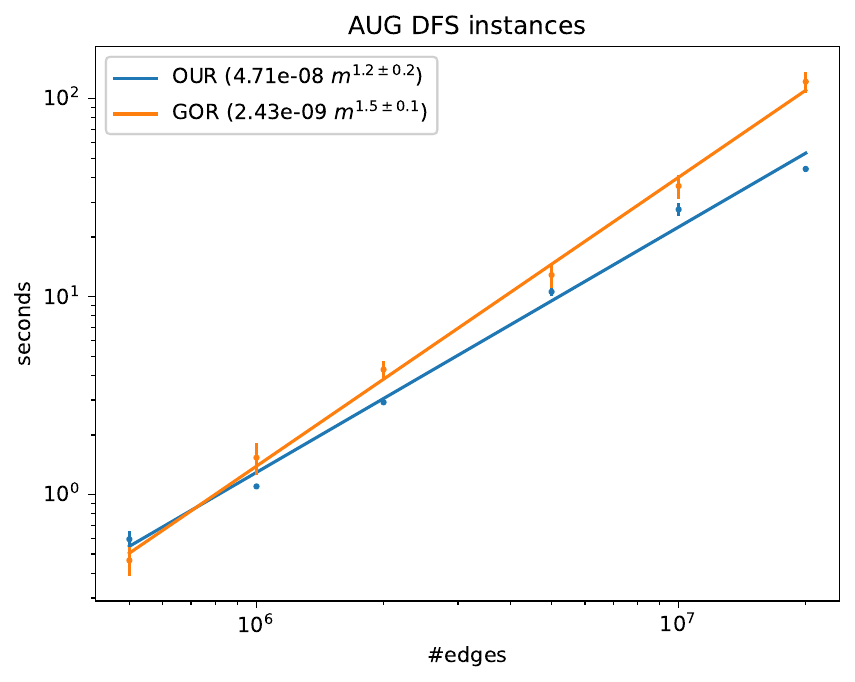}
    \caption{Performance of \BCF vs \GOR on \augdfs instances (see Section~\ref{sec:data}).}
    \label{fig:dfs-bcf-vs-gor}
    \end{subfigure}
    \hfill
    \begin{subfigure}[t]{.47\linewidth}
    \includegraphics[width=\linewidth]{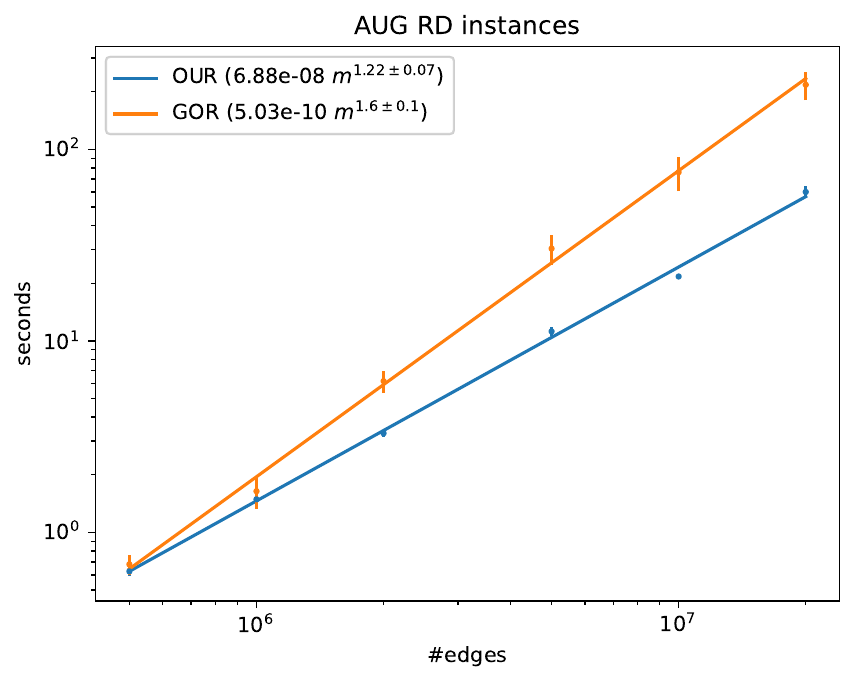}
    \caption{Performance of \BCF vs \GOR on \augrdone instances (see Section~\ref{sec:data}).}
    \label{fig:rd1-bcf-vs-gor}
    \end{subfigure}
    
    \begin{subfigure}[t]{.47\linewidth}
    \includegraphics[width=\linewidth]{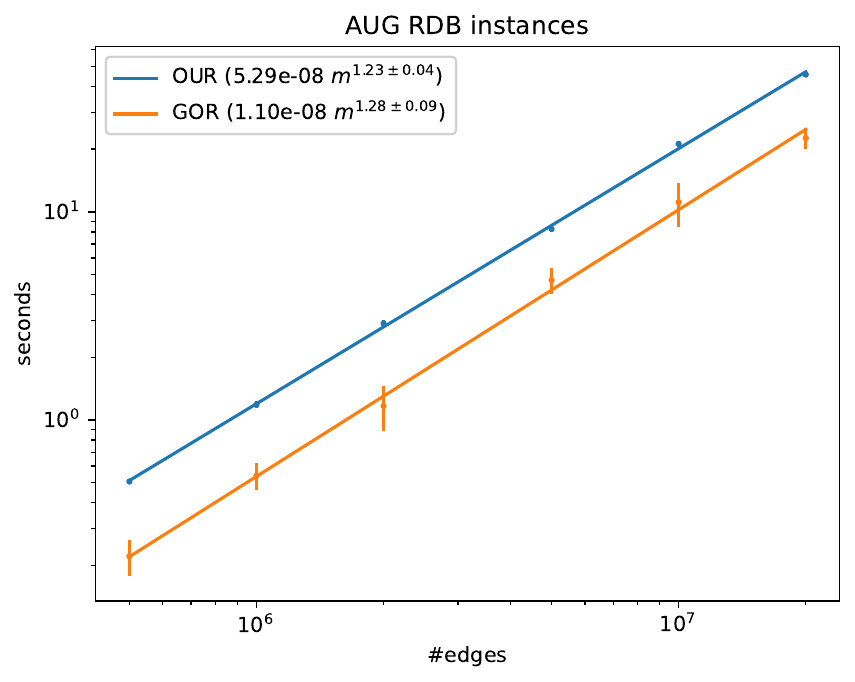}
    \caption{Performance of \BCF vs \GOR on \augrdtwo instances (see Section~\ref{sec:data}).}
    \label{fig:rd2-bcf-vs-gor}
    \end{subfigure}
    \hfill
    \begin{subfigure}[t]{.47\linewidth}
    \includegraphics[width=\linewidth]{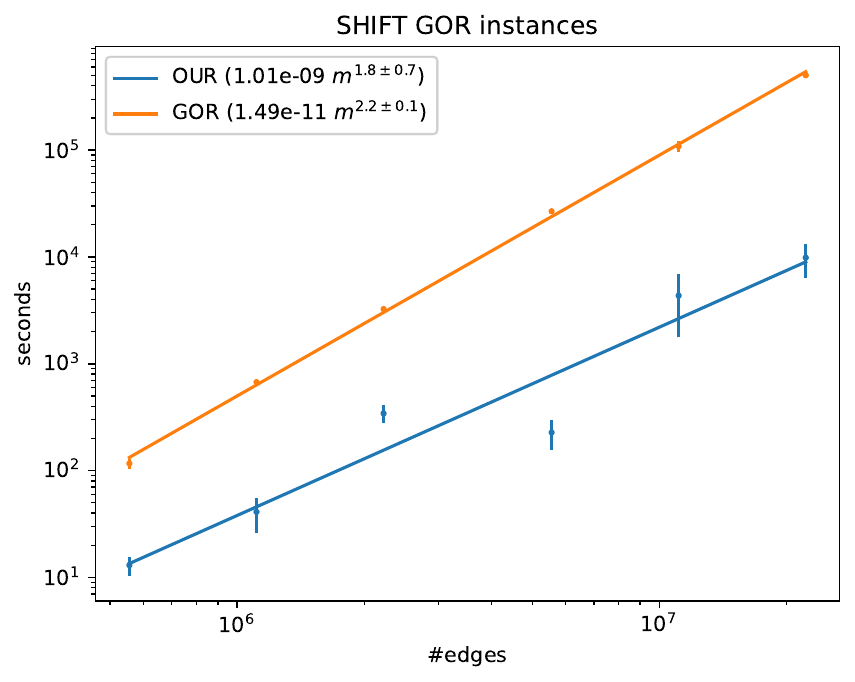}
    \caption{Performance of \BCF vs \GOR on SHIFTED \auggor instances (see Section~\ref{sec:data}).
    }
    \label{fig:shift-bcf-vs-gor}
    \end{subfigure}

    \caption{Our main experiments comparing the running time of \BCF and \GOR.}
    \label{fig:main_experiments}
\end{figure*}

\begin{figure}[t]
    \centering
    \includegraphics[width=0.5\linewidth]{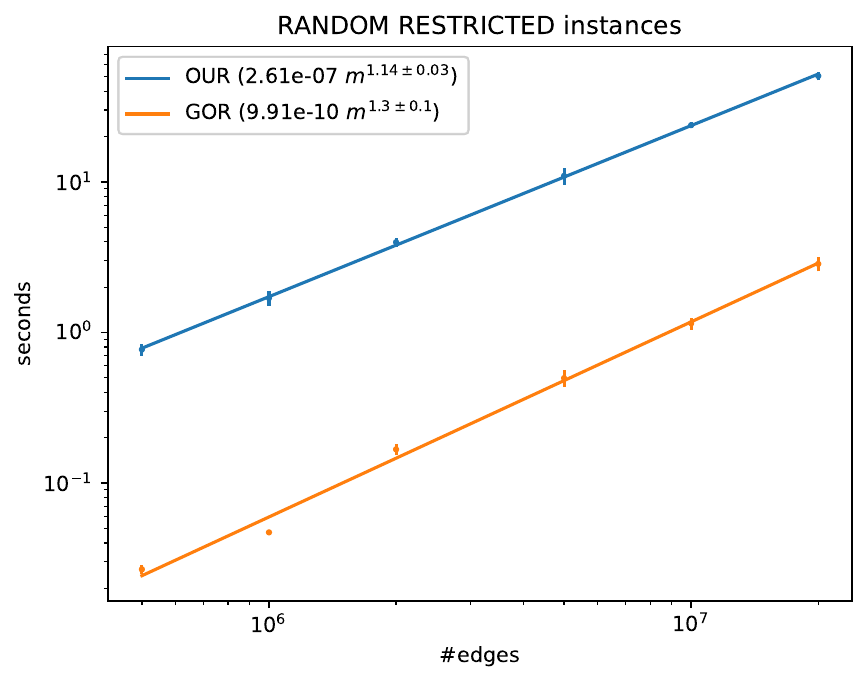}
    \caption{Performance of \BCF vs \GOR on \rand instances.
    }
    \label{fig:rand-bcf-vs-gor}
\end{figure}

\subsection{Comparison} \label{subsec:comparison}
In this section we compare the performance of \BCF and \GOR on several instances. The size of the instances that we use ranges from $5 \cdot 10^5$ edges to $2 \cdot 10^7$ edges.
We consider the total running times but also the scaling behavior of both algorithms.

Let us first consider the behavior of \BCF and \GOR on the \rand instances, see Figure~\ref{fig:rand-bcf-vs-gor}.
The scaling behavior of \BCF is only slightly better than \GOR, however \BCF is consistently slower by a significant factor.
We note, however, that the overall running times for both algorithms on these instances are low compared to our other experiments on graphs with similar sizes.
Let us now consider the DIMACS instance.
We compare the running time of \BCF and \GOR on the unmodified instance, which we refer to as \usa{0}, and on the instance with a potential shift $W$ (see Section~\ref{sec:data}) of $1$, $10$, and $100$ (\usa{1}, \usa{10}, \usa{100}, respectively). On \usa{0} \BCF is $5.4$ time slower. On \usa{1} and \usa{10} \BCF is $7.1$ times slower, and on \usa{100} \BCF is $6.1$ time slower.

\begin{table*}[t]
    \caption{\BCF vs \GOR on \bad instances with $10^{6}$ edges. Time is expressed in seconds.}
    \centering
\begin{tabular}{cccccc}
\toprule
     &                   \badbfct &                    \baddfs &                  \badgor &                  \badrdone &                \badrdtwo \\
\midrule
\BCF &            $1.14 \pm 0.03$ &            $0.71 \pm 0.03$ & $\mathbf{0.95 \pm 0.02}$ &            $0.90 \pm 0.03$ & $\mathbf{0.87 \pm 0.02}$ \\
\GOR & $\mathbf{0.014 \pm 0.003}$ & $\mathbf{0.016 \pm 0.002}$ &             $720 \pm 20$ & $\mathbf{0.028 \pm 0.002}$ &              $260 \pm 3$ \\
\bottomrule
\end{tabular}
    \label{tab:bad-bcf-vs-gor}
\end{table*}

We now consider experiments on the \bad instances described in Section~\ref{sec:bad}.
As discussed in Section~\ref{sec:bad_aug}, a comparison on these instances is misleading as our implementation explicitly exploits the DAG structure while \GOR does not.
However, it is still enlightening to see for which instances \GOR performs poorly.
We show the results of these experiments in Table~\ref{tab:bad-bcf-vs-gor}.
We can see that \BCF has a stable running time on these instances, as expected.
Furthermore, we can clearly see the running time overhead of the data structures of \BCF over the more lightweight \GOR.
However, the instances \badgor and \badrdtwo are clearly pathological for \GOR and it takes excessive time to solve these two instances, while the running time of \BCF is similar to the one for other instances.
We believe that an approach similar to ours that reduces the overhead induced by the data structures could become competitive to \GOR on the instances where it is faster.

We now perform experiments on the \aug instances, see Figures~\ref{fig:gor-bcf-vs-gor}~to~\ref{fig:rd2-bcf-vs-gor}.
First, note that on all \aug instances the scaling behavior of \BCF is better than the scaling behavior of \GOR.
In particular, the regression results for the running times of \BCF on all of the restricted instances (i.e., Figures~\ref{fig:bfct-bcf-vs-gor}~to~\ref{fig:rd2-bcf-vs-gor}) are compatible with the theoretically expected near-linear running time:
As described in Section~\ref{subsec:ouralg}, we could expect a running time of $O((m+n \log n) \log^2 n)$, which is $O(n \log^3 n)$ for $m \in O(n)$.
The same regression that we use for the running times applied to $x \log^3(x)$ for sampled $x \in [5 \cdot 10^5, 2 \cdot 10^7]$ leads to a polynomial factor of $x^{1.20 \pm 0.01}$, which is very close to the experimental results in the restricted cases.
On the non-restricted augmented instance \auggor, the running time of \BCF is significantly worse than on the restricted instances, even though it is still significantly better than the running time of \GOR.
When it comes to absolute running times, our experiments show a diverse picture.
On the \auggor and \augrdone instances, \BCF is always faster than \GOR, see Figures~\ref{fig:gor-bcf-vs-gor}~and~\ref{fig:rd1-bcf-vs-gor}.
However, on the \augbfct and \augrdtwo instances, \GOR is always faster than \BCF, see Figures~\ref{fig:bfct-bcf-vs-gor}~and~\ref{fig:rd2-bcf-vs-gor}.
On \augdfs, \BCF is slower on the smallest instance and consistently faster on all the other sizes, see Figure~\ref{fig:dfs-bcf-vs-gor}.
Summarizing, we want to stress that while \GOR indeed is very fast on some instances, it also exhibits fluctuations of multiple orders of magnitudes on graphs of similar size due its adaptiveness; we do not see such fluctuations in our implementation when running it on restricted instances.

We now consider the experiments on the \shiftgor instance, see Figure~\ref{fig:shift-bcf-vs-gor}.
This instance is particularly hard for \GOR, as it highlights its quadratic nature.
Despite being far from restricted, \BCF performs consistently better.
\BCF outperforms \GOR by almost two orders of magnitudes on some of these non-trivial instances.
It is to notice, however, that the running time of \BCF is strongly unstable, probably because of the strong non-restricted nature of this instance.
Each point in the plot in Figure~\ref{fig:shift-bcf-vs-gor} is the average of the running time of $5$ different random generations of an instance of that size (and only one run per instance). This only marginally smoothed the unstable trend.

\bibliography{main}

\appendix

\section{Instances} \label{sec:app_instances}
In the following we present a comprehensive overview of all the different types of instances we perform experiments on.

\subsubsection*{Augmented \bad instances} \label{sec:app_bad_aug}
The \AUG instances are obtained from the \BAD instances performing two crucial modifications:
\begin{itemize}
    \item We randomly permute the vertex labels, as for many of the instances, the hardness for previous algorithms relies on the specific vertex ordering.
    \item We augment the graphs with 5 times as many edges as the original graph, which are selected uniformly at random using rejection sampling.
    We choose a constant factor as we aim to perform experiments on graphs with many vertices.
    Each of these new edges is assigned a large positive weight such that the minimum cycle mean remains at least 1, as we want to obtain restricted instances (see \Cref{sec:prelims}).
    As the initial graphs are DAGs, this weight is easy to determine: for any path $P$ in $G$, we have to ensure that for the augmenting edge $e$ that closes this path, we have
    \[
    w(P \circ \{e\}) \geq |P|+1 \iff w(e) \geq |P| + 1 - w(P),
    \]
    and that all new edge weights are larger than the edge weights of the original graph.
    We simply set the edge weight of all augmenting edges to a single pre-determined value that ensures the above properties.
    Note that augmenting the graphs with these edges breaks the DAG structure while maintaining the shortest path structure, as desired.
\end{itemize}

Note that the above modifications only ensure that the minimum cycle mean is at least 1, but we did not argue whether the instances fulfill the second condition of restrictedness (see \Cref{sec:prelims}), i.e., whether all edge weights are at least $-1$.
Most of the instances are already restricted or require minimal modifications to be transformed into restricted instances.
\badbfct and \baddfs are restricted without modifications.
The instances \badrdone and \badrdtwo are originally not restricted, as both contain weight $-2$ edges.
However, since any weight $-2$ edge is always preceded by a degree-2 vertex with only an incoming edge of weight $0$, we set both of these edge weights to $-1$.
After this modification both instances are restricted.
The \badgor instance is not restricted, however, the reason it is not restricted is only due to a single edge having weight $-3 \frac{n - 1}{2}$. We keep the instance unrestricted\footnote{However, note that we still ensure that the minimum cycle mean is at least 1 after augmentation.} as there is no straight-forward way to make it restricted, and it leads to insightful experimental results.

\subsubsection*{\shiftgor}
We extract the \shiftgor instances as follows:
\begin{itemize}
    \item Consider the largest SCC in the first recursion step of \Cref{alg:restricted_bcf}, and consider the potentials before the execution of \LazyDijkstra at the end of \Cref{alg:restricted_bcf}.
    \item We add an explicit super source connected with zero weight edges to all vertices.
    \item We then change the edge weights according to the potentials, i.e., $w(u,v) \gets w(u,v) + \phi(u) - \phi(v)$.
\end{itemize}
Since the size of the SCCs depends on the allocation of the augmented edges during the generation of the \auggor instance, the size of the \shiftgor instances may vary slightly.
Furthermore, note that the \shiftgor instances are not restricted as they contain many edges with weight smaller than $-1$.
Despite the non-restricted nature of these instances, we show that \BCF outperforms \GOR on them (see Section~\ref{subsec:comparison}).

\subsubsection*{\rand instances}
The generation process of \rand instances is as follows.
We first generate the structure of the graph by inserting $6n$ edges uniformly at random by rejection sampling.
We set all edge weights to $2$ initially, implying an initial minimum cycle mean of $2$.
We then iteratively run Dijkstra's algorithm on yet unvisited vertices always choosing a random one as source.
This procedure partitions the vertices into shortest path trees.
We perform a potential shift by the distances, resulting in zero weight edges after potential application.
We set the weight of the edges connecting any two shortest path trees to $0$.\footnote{This does not introduce negative cycles as these edges are always forward edges topologically.}
Finally, we decrease all the edge weights by $1$.
Now the minimum cycle mean is at least $1$ and the minimum weights are at least $-1$.

\subsubsection*{\USA instances}
This is a graph with roughly 24 million nodes, 58 million edges, and all its edge weights are positive.
To obtain a graph with negative edge weights for our experiments, we run a shortest path algorithm from the vertex with label zero and assign the distances as potentials to the vertices.
Subsequently, we perform a random shift on the potentials by adding a random number in the range $[0, W]$ to the potential of each vertex.
We then apply the potentials to the edge weights.
After the application, the edge weights on the shortest path tree are in the range $[-W, W]$; for $W = 1$ we obtain restricted instances.

\section{BAD Instances} \label{sec:app_bad_instances}

In this section, we provide a description of the instances that we use in our experimental evaluation that are not yet described in Section~\ref{sec:experiments}.
We note that many of these graphs were first described in~\cite{CherkasskyGGTW09} and we add the descriptions of these graphs here for completeness.
In these cases, we closely follow the description of~\cite{CherkasskyGGTW09}.
We refer to the vertices by their indices.

\subsubsection*{\badbfct}
The graph \badbfct contains $n = 4k - 1$ vertices and $m = 5k - 3$ edges, where $k$ is given as a parameter.
The vertices $1$ to $3k - 2$ together with the edges $(i + 1, i)$, $1 \leq i \leq 3k - 3$, form a path $P$. Every third vertex on $P$ is connected to vertex $3k - 1$, that is, we have the edges $(3(i - 1) + 1, 3k - 1)$ for $1 \leq i \leq k$. Finally, vertex $3k - 1$ is connected to vertices $3k$ to $4k - 1$. All edges in this graph have weight $-1$. Figure~\ref{fig:app_bad-bfct} gives an example for $k = 4$.

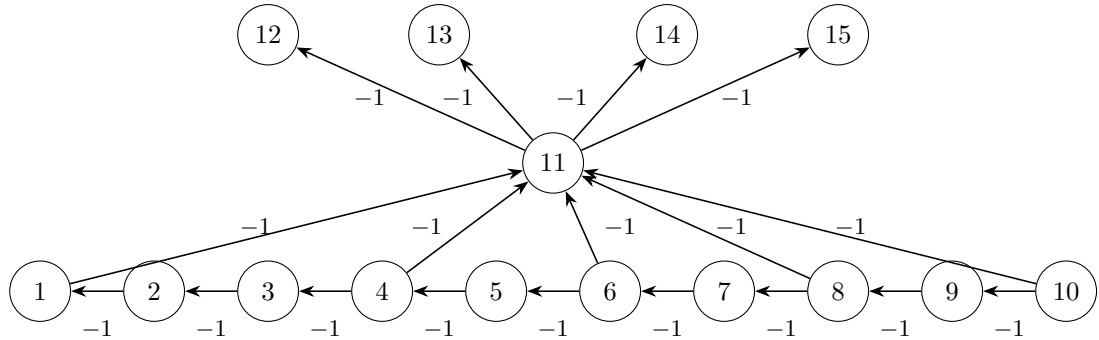
\begin{figure}[htb!]
    \centering
\begin{tikzpicture}[
    node distance=1.5cm,
    every node/.style={circle, draw, minimum size=0.8cm},
    arrow/.style={-Stealth, semithick},
    label/.style={font=\small, inner sep=0pt, outer sep=2pt, fill=none, draw=none}
]

\foreach \x in {1,...,10}
    \node (\x) at (\x*1.5-1.5,0) {\x};

\node (11) at (6.75,1.7) {11};
\node (12) at (3,3.4) {12};
\node (13) at (5.25,3.4) {13};
\node (14) at (8.25,3.4) {14};
\node (15) at (10.5,3.4) {15};

\foreach \x [remember=\x as \lastx (initially 1)] in {2,...,10}
    \draw[arrow] (\x) -- (\lastx) node[midway, below=1pt, label] {$-1$};

\draw[arrow] (1) -- (11) node[midway, left=2pt, label] {$-1$};
\draw[arrow] (4) -- (11) node[midway, left=2pt, label] {$-1$};
\draw[arrow] (11) -- (12) node[midway, left=2pt, label] {$-1$};
\draw[arrow] (11) -- (13) node[midway, left=1pt, label] {$-1$};
\draw[arrow] (11) -- (14) node[midway, left=1pt, label] {$-1$};
\draw[arrow] (11) -- (15) node[midway, right=2pt, label] {$-1$};
\draw[arrow] (6) -- (11) node[midway, right=1pt, label] {$-1$};
\draw[arrow] (8) -- (11) node[midway, right=0pt, label] {$-1$};
\draw[arrow] (10) -- (11) node[midway, right=2pt, label] {$-1$};

\end{tikzpicture}
    \caption{Schematic representation of \badbfct instances with $15$ nodes and $18$ edges.}
    \label{fig:app_bad-bfct}
\end{figure}

\subsubsection*{\badgor}
The graph \badgor consists of a path $P$ of $k$ vertices $1, \dots , k$,
a vertex $k+1$ with $k$ incoming and $k$ outgoing edges, and $k$ vertices $k+2, \dots , 2k+1$. The weights are $w(1, 2) = -3k$, $w(1, k + 1) = -1$, and $w(i, i + 1) = 1$ for $2 \leq i \leq k - 1$. Also, $w(k + 1, k + 1 + i) = -1$ for $1 \leq i \leq k$, and $w(i, k + 1) = 2(k - i)$ for $2 \leq i \leq k$. We have $n = 2k + 1$ and $m = 3k - 1$. Figure~\ref{fig:app_bad-gor} gives an example for $k = 7$.

\begin{figure}[htb!]
    \centering
\begin{tikzpicture}[
    node distance=1.5cm,
    every node/.style={circle, draw, minimum size=0.8cm},
    arrow/.style={-Stealth, semithick},
    label/.style={font=\small, inner sep=0pt, outer sep=2pt, fill=none, draw=none}
]

\node (1) at (0,0) {1};
\node (2) at (2,0) {2};
\node (3) at (4,0) {3};
\node (4) at (6,0) {4};
\node (5) at (8,0) {5};
\node (6) at (10,0) {6};
\node (7) at (12,0) {7};
\node (8) at (6,1.5) {8}; %

\node (9) at (0,3) {9}; %
\node (10) at (2,3) {10};
\node (11) at (4,3) {11};
\node (12) at (6,3) {12};
\node (13) at (8,3) {13};
\node (14) at (10,3) {14};
\node (15) at (12,3) {15};

\draw[arrow] (1) -- (2) node[midway, below=1pt, label] {$-21$};
\draw[arrow] (2) -- (3) node[midway, below=1pt, label] {$1$};
\draw[arrow] (3) -- (4) node[midway, below=1pt, label] {$1$};
\draw[arrow] (4) -- (5) node[midway, below=1pt, label] {$1$};
\draw[arrow] (5) -- (6) node[midway, below=1pt, label] {$1$};
\draw[arrow] (6) -- (7) node[midway, below=1pt, label] {$1$};

\draw[arrow] (1) -- (8) node[midway, left=0.5pt, label] {$-1$};
\draw[arrow] (2) -- (8) node[midway, left=0.5pt, label] {$10$};
\draw[arrow] (3) -- (8) node[midway, left=0.5pt, label] {$8$};
\draw[arrow] (4) -- (8) node[midway, left=0.5pt, label] {$6$};
\draw[arrow] (5) -- (8) node[midway, left=0.5pt, label] {$4$};
\draw[arrow] (6) -- (8) node[midway, left=0.5pt, label] {$2$};
\draw[arrow] (7) -- (8) node[midway, left=0.5pt, label] {$0$};
\draw[arrow] (8) -- (9) node[midway, left=2pt, label] {$-1$};
\draw[arrow] (8) -- (10) node[midway, left=2pt, label] {$-1$};
\draw[arrow] (8) -- (11) node[midway, left=2pt, label] {$-1$};
\draw[arrow] (8) -- (12) node[midway, left=1pt, label] {$-1$};
\draw[arrow] (8) -- (13) node[midway, left=1pt, label] {$-1$};
\draw[arrow] (8) -- (14) node[midway, left=2pt, label] {$-1$};
\draw[arrow] (8) -- (15) node[midway, left=2pt, label] {$-1$};

\end{tikzpicture}
    \caption{Schematic representation of \badgor instances with $15$ nodes and $20$ edges.}
    \label{fig:app_bad-gor}
\end{figure}
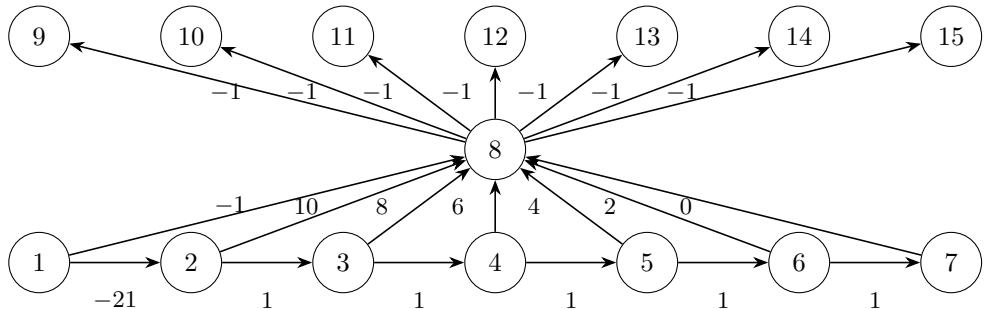

\subsubsection*{\badrdone}
The graph \badrdone consists of $k$ edges $(x_i, y_i) \coloneqq (2i - 1, 2i)$ for
$1 \leq i \leq k$, together with the edges $(x_i, x_i+1)$ and $( y_i, x_i+1)$ for $1 \leq i < k$. We set $w(x_i, y_i) = 0$, $w(y_i, x_i+1) = -2$, and $w(x_i, x_i+1) = -1$. Figure~\ref{fig:app_bad-rd1} gives an example for $k = 5$.

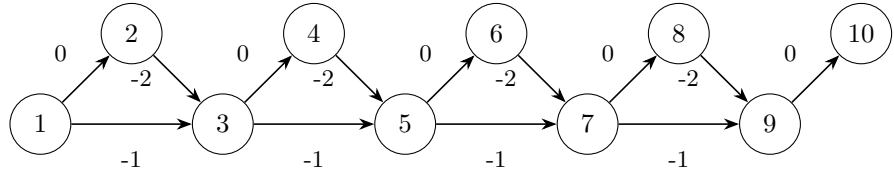
\begin{figure}[htb!]
    \centering
\begin{tikzpicture}[
node distance=1.5cm,
every node/.style={circle, draw, minimum size=0.8cm},
arrow/.style={-Stealth, semithick},
label/.style={font=\small, inner sep=0pt, outer sep=2pt, fill=none, draw=none}
]
\foreach \x in {1,3,5,7,9}
\node (\x) at (\x*1.2-1.2,0) {\x};

\foreach \x in {2,4,6,8,10}
\node (\x) at (\x*1.2-1.2,1.2) {\x};

\foreach \x [remember=\x as \lastx (initially 1)] in {3,5,7,9}
\draw[arrow] (\lastx) -- (\x) node[midway, below, label] {-1};

\foreach \x [remember=\x as \lastx (initially 1), evaluate=\lastx as \previousx using int(\lastx+1)] in {3,5,7,9}
\draw[arrow] (\previousx) -- (\x) node[midway, left, label] {-2};

\foreach \x [remember=\x as \lastx (initially 0), evaluate=\lastx as \previousx using int(\lastx+1)] in {2,4,6,8,10}
\draw[arrow] (\previousx) -- (\x) node[midway, above left, label] {0};

\end{tikzpicture}
    \caption{Schematic representation of \badrdone instances with $10$ nodes and $13$ edges.
    }
    \label{fig:app_bad-rd1}
\end{figure}

\subsubsection*{\badrdtwo}
The graph \badrdtwo is obtained by modifying \badrdone. We connect each even vertex $y_i$ to a new vertex $2k+1$ and connect $2k+1$ to $k$ new vertices $2k+2, \dots , 3k+1$. All new edges have weight $-1$. The new graph has $n = 3k+1$ vertices and $m = 5k-2$ edges. Figure~\ref{fig:app_bad-rd2} gives an example for $k = 5$.

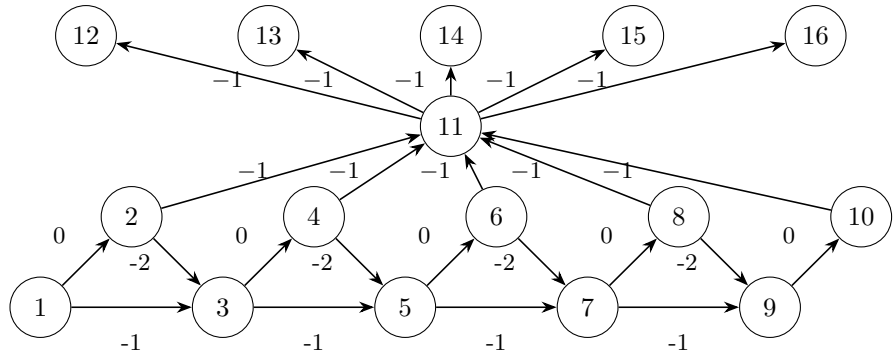
\begin{figure}[htp]
    \centering

\begin{tikzpicture}[
    node distance=1.5cm,
    every node/.style={circle, draw, minimum size=0.8cm},
    arrow/.style={-Stealth, semithick},
    label/.style={font=\small, inner sep=0pt, outer sep=2pt, fill=none, draw=none}
]

\foreach \x in {1,3,5,7,9}
\node (\x) at (\x*1.2-1.2,0) {\x};

\foreach \x in {2,4,6,8,10}
\node (\x) at (\x*1.2-1.2,1.2) {\x};

\node (11) at (5.4,2.4) {11};
\node (12) at (0.6,3.6) {12};
\node (13) at (3,3.6) {13};
\node (14) at (5.4,3.6) {14};
\node (15) at (7.8,3.6) {15};
\node (16) at (10.2,3.6) {16};

\foreach \x [remember=\x as \lastx (initially 1)] in {3,5,7,9}
\draw[arrow] (\lastx) -- (\x) node[midway, below, label] {-1};

\foreach \x [remember=\x as \lastx (initially 1), evaluate=\lastx as \previousx using int(\lastx+1)] in {3,5,7,9}
\draw[arrow] (\previousx) -- (\x) node[midway, left=0.5pt, label] {-2};

\foreach \x [remember=\x as \lastx (initially 0), evaluate=\lastx as \previousx using int(\lastx+1)] in {2,4,6,8,10}
\draw[arrow] (\previousx) -- (\x) node[midway, above left=0.5pt, label] {0};

\foreach \x in {2,4,6,8,10}
    \draw[arrow] (\x) -- (11) node[midway, left=1pt, label] {$-1$};

\foreach \x in {12,...,16}
    \draw[arrow] (11) -- (\x) node[midway, left=2pt, label] {$-1$};

\end{tikzpicture}

    \caption{Schematic representation of \badrdtwo instances with $16$ nodes and $23$ edges.}
    \label{fig:app_bad-rd2}
\end{figure}

\subsubsection*{\baddfs}
The graph \baddfs is also obtained by modifying \badrdone. We add the edges $(y_i, y_i+1)$ for $1 \leq i < k$. We set all edge weights to $-1$. We also relabeled nodes such that the index of the nodes below are all smaller than the indices above.
Figure~\ref{fig:app_bad-dfs} gives an example for $k = 5$.

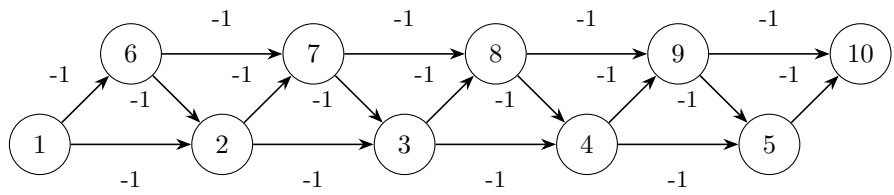
\begin{figure}[htp]
    \centering
\begin{tikzpicture}[
node distance=1.5cm,
every node/.style={circle, draw, minimum size=0.8cm},
arrow/.style={-Stealth, semithick},
label/.style={font=\small, inner sep=0pt, outer sep=2pt, fill=none, draw=none}
]
\foreach \x in {1,2,3,4,5}
\node (\x) at (\x*2.4-2.4,0) {\x};

\foreach \x in {6,7,8,9,10}
\node (\x) at (\x*2.4-13.2,1.2) {\x};

\foreach \x [remember=\x as \lastx (initially 1)] in {2,3,4,5}
\draw[arrow] (\lastx) -- (\x) node[midway, below, label] {-1};

\foreach \x [evaluate=\x as \previousx using int(\x+4)] in {2,3,4,5}
\draw[arrow] (\previousx) -- (\x) node[midway, left, label] {-1};

\foreach \x [evaluate=\x as \nextx using int(\x+5)] in {1,2,3,4,5}
\draw[arrow] (\x) -- (\nextx) node[midway, above left, label] {-1};

\foreach \x [remember=\x as \lastx (initially 6)] in {7,8,9,10}
\draw[arrow] (\lastx) -- (\x) node[midway, above, label] {-1};

\end{tikzpicture}
    \caption{Schematic representation of \baddfs instances with $10$ nodes and $17$ edges.}
    \label{fig:app_bad-dfs}
\end{figure}

\end{document}